\newcommand{\Var}{\text{Var}}
\newcommand{\Cov}{\text{Cov}}
\newtheorem{lemma}{Lemma}
\DeclareSymbolFont{bbold}{U}{bbold}{m}{n}
\DeclareSymbolFontAlphabet{\mathbbold}{bbold}
\title{On Variance Estimation for the One-Sample Log-Rank Test}
\author[1,*]{Moritz Fabian Danzer}
\author[1]{Andreas Faldum}
\author[1]{Rene Schmidt}
\affil[1]{Institute of Biostatistics and Clinical Research, University of Muenster, Germany}
\affil[*]{Corresponding author, moritzfabian.danzer@ukmuenster.de}
\date{\today}
\begin{document}
	
\maketitle

\begin{abstract}
	Time-to-event endpoints show an increasing popularity in phase II cancer trials. The standard statistical tool for such one-armed survival trials is the one-sample log-rank test. Its distributional properties are commonly derived in the large sample limit. It is however known from the literature, that the asymptotical approximations suffer when sample size is small. There have already been several attempts to address this problem. While some approaches do not allow easy power and sample size calculations, others lack a clear theoretical motivation and require further considerations. The problem itself can partly be attributed to the dependence of the compensated counting process and its variance estimator. For this purpose, we suggest a variance estimator which is uncorrelated to the compensated counting process. Moreover, this and other present approaches to variance estimation are covered as special cases by our general framework. For practical application, we provide sample size and power calculations for any approach fitting into this framework. Finally, we use simulations and real world data to study the empirical type I error and power performance of our methodology as compared to standard approaches.
\end{abstract}

\textbf{Keywords:} one-sample log-rank test, phase II trial, reference distribution, single-arm study, survival analysis

\section{Introduction}
As recently shown, \cite{Ivanova:2016} time-to-event endpoints, such as overall survival (OS) and progression-free survival (PFS), experience increased popularity in Phase II oncology trials. This may be due to the changes in clinical development of oncology treatments as new treatment types relying on different mechanisms come up. Additionally, a majority of these Phase II trials is single-armed \cite{Ivanova:2016}. However, the one-sample log-rank test, which is widely used in these situations, is known to be quite conservative \cite{Wu:2015}. This is the case especially for small sample sizes. There have been some efforts to solve this problem. Unfortunately, some of these methods are quite complicated as they require estimations of higher moments and are not suitable for sample size calculation as the distribution of the test statistic under alternative hypotheses is unknown, \cite{Tu:1996, Sun:2011} while others lack a clear theoretical motivation and can be anti-conservative in some cases \cite{Wu:2016, Wu:2014}. Nevertheless, the approach by Wu \cite{Wu:2014} sheds a light on the previously neglected possibility to include the counting process into the variance estimation.\\
The conservativeness of the classical approach as well as the tendency towards anti-conservativeness of the new approach \cite{Wu:2014} in many scenarios is due to the skewness of the one-sample log-rank statistic when sample size is small. A central cause for this skewness is the dependence between the unstandardized one-sample log-rank statistic and those variance estimators commonly used for standardization. Because while Basu's theorem \cite{Basu:1955} guarantees independence of mean and variance estimator for normally distributed data, this does not generally apply to the survival setting. However, the theoretical result on which the asymptotical correctness of the test is based, \cite{Andersen:93} leaves us with one degree of freedom concerning the choice of the variance estimator. We develop a framework in which this property is taken advantage of. Exploiting this degree of freedom, we find a variance estimator which is again uncorrelated to the compensated counting process under a fixed hypothesis of the testing problem. In addition, the classical one-sample log-rank test and already existing improvements \cite{Wu:2016, Wu:2014} can be embedded as special cases into it.\\
Especially recently developed methodology concerning adaptive \cite{Schmidt:18}, multi-stage \cite{Shan:2019, Belin:2017, Kwak:2014} and multivariate extensions \cite{Danzer:2021} of the one-sample log-rank test require and benefit from a proper behaviour of the underlying simple one-sample log-rank test as they rely on the concordance of the distribution of the test statistics with the normal distribution for even smaller quantiles.\\
Our approach is neither computationally intense nor requires any information beyond the input data requested by standard software as PASS \cite{PASS:2018} or nQuery \cite{nQuery:2017} for designing classical one-sample log-rank tests. Hence, existing tools for planning and execution of the one-sample log-rank test can easily be extended to incorporate any approach fitting into our framework.\\
The paper is organized as follows. We construct our framework in Section \ref{section:definitions} after settling basic notation. Afterwards, we present power and sample size calculations. In Section \ref{section:optimal_weight} we propose the uncorrelated variance estimator. Existing approaches and their type I error rate and power performance for small sample sizes are compared by simulations in Section \ref{section:simulations}. We finally conduct a case study in Section \ref{section:real_world_example} in order to give an example for an application of the procedure. We conclude with a discussion in which we give some advice for planning and execution of one-sample log-rank tests.\\

\section{Testing problem and significance tests}\label{section:definitions}
In a single-arm study with $n$ patients, let $Y_i$ denote the calendar time of entry of the $i$-th patient and $T_i$ its time from entry until event. In what follows, it will be important to distinguish between censoring that occurs at a fixed calendar date $t$ of analysis, given by $(t-Y_i)_+\coloneqq \max(t-Y_i,0)$, and random dropouts, denoted by $C_i$. At calendar time $t$, only $T_i \wedge C_i \wedge (t-Y_i)_+ \coloneqq \min(T_i, C_i, (t-Y_i)_+)$ will be observable for any $i\in \{1,\dots,n\}$. We assume $T_i,\;C_i$ and $Y_i$ to be mutually independent for any $i$ and the tuples $\{(T_i, C_i, Y_i), i\in \{1,\dots,n\}\}$ to be independent and identically distributed with the same distribution as the tuple $(T,C,Y)$, say. 
Let $F_T$ denote the true unknown distribution function of the survival random variable $T$. Analogously, $F_Y$ and $F_C$ are the true unknwon distribution functions of $Y$ respectively (resp.) $C$. We also fix a reference distribution function $F_{T,0}$ that describes the expected distribution of $T$ under standard of care while $F_{T,1}$, which is the planning alternative underlying sample size calculations, represents the expected distribution of $T$ under a new treatment. $F_{T,0}$ is commonly chosen in the light of historical data under a standard treatment.\\ 
For any distribution function of a non-negative random variable $X$ with distribution function $F_X$ we can compute their density $f_X$, survival function $S_X$, hazard $\lambda_X$ and cumulative hazard $\Lambda_X$ via the following well-known identities:
\begin{align}\label{eq:general_dependencies}
\begin{split}
&\frac{d}{dt}F_X(t)=f_X(t),\;F_X(t)=1-S_X(t),\; \lambda_X(t)=\frac{f_X(t)}{S_X(t)}\text{ and}\\
&\Lambda_X(t)=\int_0^t \lambda_X(s)\;ds=-\text{log}(1-F_X(t)).
\end{split}
\end{align}
For example, $\Lambda_{T,0}(t)=\log(1-F_{T,0}(t))$ denotes the reference cumulative hazard function. In the sequel, we consider testing the two-sided hypothesis
\begin{equation}
H_0 \colon \Lambda_T = \Lambda_{T,0}
\end{equation}
which is equal to the intersection of the two one-sided hypotheses
\begin{equation}
H_{0,1}\colon \Lambda_{T} \geq \Lambda_{T,0} \text{ and } H_{0,2}\colon \Lambda_{T} \leq \Lambda_{T,0}.
\end{equation}
The number of events observed at calendar time $t$ given $Y_i=y_i$ for any $i$ is
\begin{equation}\label{eq:counting_process}
N(t)\coloneqq\sum_{i=1}^n \underbrace{\mathbbold{1}_{\{T_i\leq(t-y_i)_+ \wedge C_i\}}}_{\eqqcolon N_i(t)}.
\end{equation}
The number of expected events under the null hypothesis $H_0$ given $Y_i=y_i$ for any $i$ is
\begin{equation}\label{eq:compensator_under_H0}
A_0(t)\coloneqq \sum_{i=1}^n \underbrace{\Lambda_{T,0}(T_i \wedge C_i \wedge (t-y_i)_+)}_{\eqqcolon A_{0,i}(t)}.
\end{equation}
We define additionally
\begin{equation}\label{eq:martingale_under_H0}
M_0(t)\coloneqq n^{-\frac{1}{2}} \left(N(t)-A_0(t)\right) \quad \forall t\geq 0.
\end{equation}
It is known from the literature that $(M_0(t))_{t \geq 0}$ is a martingale under the null hypothesis with respect to (w.r.t.) the filtration generated by the survival processes of the $n$ patients.
It follows from Theorem II.5.1 of \textit{Andersen et al.}\cite{Andersen:93} that this martingale converges in distribution against a continuous Gaussian martingale as the number of patients $n$ converges to infinity. In particular,
\begin{equation}
M_0(t) \overset{\mathcal{D}}{\underset{n \to \infty}{\to}} \mathcal{N}(0,V(t))
\end{equation}
for any $t\geq 0$ with $V(t)=P_{H_0}[T \leq C \wedge (t-Y)_+]$. The same theorem also leaves us with two possible choices to estimate the non-decreasing covariance function $V(t)$ of the limiting process under the null hypothesis because
\begin{equation}
\frac{1}{n} N(t) \overset{P}{\underset{n \to \infty}{\rightarrow}} V(t)
\end{equation}
as well as
\begin{equation}
\frac{1}{n} A_0(t) \overset{P}{\underset{n \to \infty}{\rightarrow}} V(t).
\end{equation}
Consequently, by a linear combination with a weight function $w\colon [0,\infty) \to [0,1]$, we also have
\begin{equation}\label{eq:generalcorrectvarianceestimator}
\hat{\sigma}_w^2(t) \coloneqq \frac{1}{n} \left(w(t) \cdot N(t) + (1-w(t)) \cdot A_0(t) \right) \overset{P}{\underset{n \to \infty}{\rightarrow}} V(t)
\end{equation} 
pointwise for any $t \in [0,\infty)$. By Slutsky's theorem, we conclude that 
\begin{equation}\label{eq:convergence_test_statistic}
\frac{M_0(t)}{\hat{\sigma}_w(t)}\overset{\mathcal{D}}{\rightarrow}Z
\end{equation}
pointwise in $t\geq 0$ under $H_0$, where $Z\sim \mathcal{N}(0,1)$. Thus, we obtain an asymptotically correct two-sided test with type I error level $\alpha$ of the hypothesis
$H_0 \colon \Lambda_T = \Lambda_{T,0}$ by rejecting it if
\begin{equation}\label{eq:two_sided_test}
\left| \frac{M_0(t)}{\hat{\sigma}_w(t)} \right| \geq \Phi^{-1}\left(1-\frac{\alpha}{2}\right)
\end{equation} 
where $M_0$ is defined as in \eqref{eq:martingale_under_H0}. Analogously one obtains a one-sided test with type I error level $\alpha/2$ of the null hypothesis $H_{0,1} \colon \Lambda_T \geq \Lambda_{T,0}$ by rejecting it if
\begin{equation}
\frac{M_0(t)}{\hat{\sigma}_w(t)} \leq \Phi^{-1}\left(\frac{\alpha}{2}\right).
\end{equation} 
It should be emphasised here that the functional form of $w(t)$ must be determined in advance and must not be changed in the course of the study. Under this condition, distributional convergence is guaranteed pointwise in $t$ according to formula \eqref{eq:convergence_test_statistic} independent from the specific choice of $w(t)$. By suitable choice of $w(t)$, small sample properties can be improved (see Section \ref{section:optimal_weight}).
Of course, it is easy to embed the original choice and Wu's suggestion \cite{Wu:2014} into this framework as we have $w_{orig}\equiv 0$ resp. $w_{Wu}\equiv 0.5$ in these cases which lead to
\begin{align}
&\hat{\sigma}_{w_{orig}}^2(t)\coloneqq \frac{1}{n} A(t) \qquad \qquad \text{resp.}\label{eq:orig_approach_new_framework}\\
&\hat{\sigma}_{w_{Wu}}^2(t)\coloneqq \frac{1}{2n}A(t) + \frac{1}{2n}N(t)\label{eq:Wus_approach_new_framework}.
\end{align}
Further choices are discussed in Section \ref{section:optimal_weight} and compared via simulation in Section \ref{section:simulations}.
\section{Power and sample size calculation}\label{section:sample_size}
Power and sample size will be calculated under some planning alternative $H_1\colon \Lambda_T= \Lambda_{T,1}$, say, and additional planning assumptions concerning the distribution of $Y$ and $C$. Corresponding calculations for our general approach can be adopted from previous power and sample size calculations \cite{Wu:2015}. With a fixed analysis date $t>0$ and the abbreviation
\begin{equation}\label{eq:combined_censoring}
S_{U}(s)\coloneqq S_C(s)\cdot F_Y((t-s)_+)
\end{equation}
we have:
\begin{equation}
\begin{aligned}
\mathbb{E}_{H_1}[N_i(t)]&=\int_0^t S_{U}(s)f_{T,1}(s)ds&&\eqqcolon v_1(t),\\
\mathbb{E}_{H_1}[A_{0,i}(t)]&=\int_0^t S_{U}(s)S_{T,1}(s)\lambda_{T,0}(s)ds&&\eqqcolon v_0(t),\\
\mathbb{E}_{H_1}[N_i(t)\cdot A_{0,i}(t)]&=\int_0^t S_{U}(s)f_{T,1}(s)\Lambda_{T,0}(s)ds&&\eqqcolon v_{01}(t),\\
\mathbb{E}_{H_1}[A_{0,i}(t)^2]&=2\int_0^t S_{U}(s)S_{T,1}(s)\Lambda_{T,0}(s)\lambda_{T,0}(s)ds&&\eqqcolon 2v_{00}(t).
\end{aligned}
\end{equation}
for any $i$. Note that $S_U$ is the survival function of the random variable $U\coloneqq C \wedge (t-Y)_+$. The expectation and variance of $M_0(t)$ under the alternative hypothesis amount to
\begin{equation}
\begin{aligned}
\mathbb{E}_{H_1}[M_0(t)]&=\sqrt{n}(v_1(t) - v_0(t))&&\eqqcolon \sqrt{n}\omega(t) \quad\text{resp.}\\
\Var_{H_1}(M_0(t))&=v_1(t)  - v_1^2(t) + 2v_{00}(t) - v_0^2(t) - 2v_{01}(t) + 2 v_0(t) v_1(t)&&\eqqcolon \sigma^2.
\end{aligned}
\end{equation}
For our variance estimator with the prefixed weight function $w$, we have
\begin{equation}\label{eq:variance_estimator_under_alternative}
\hat{\sigma}_w^2(t) \overset{P}{\underset{n \to \infty}{\rightarrow}} w(t) v_1(t) + (1-w(t)) v_0(t) \eqqcolon \bar{\sigma}_w^2(t). 
\end{equation}
under the alternative hypothesis $H_1$. Now, it follows from Slutsky's theorem that
\begin{equation}
\frac{M_0(t)}{\hat{\sigma}_w(t)} - \frac{\sqrt{n}\omega(t)}{\bar{\sigma}_w(t)}\underset{n\to\infty}{\overset{\mathcal{D}}{\rightarrow}} \mathcal{N}\left(0,\frac{\sigma^2(t)}{\bar{\sigma}_w^2(t)}\right)
\end{equation}
under $H_1$. Given these quantities, we obtain a sample size of
\begin{equation}\label{eq:sample_size}
n=\frac{(\bar{\sigma}_w(t)\cdot z_{1-\frac{\alpha}{2}} + \sigma(t) \cdot z_{1-\beta})^2}{\omega^2}
\end{equation}
for a two-sided test with nominal type I error level $\alpha$ and power $1-\beta$ where $z_q$ denotes the $q$-quantile of a standard normal distribution and $t$ is the prefixed calendar time of analysis.\\
We already recognize, that a decrease of $\bar{\sigma}(t)$ in \eqref{eq:sample_size} leads to a decrease of the sample size required for fixed type I \& II errors. As typically $\lambda_{T,1}(s) \leq \lambda_{T,0}(s)$ for all $s \in \mathbb{R}_+$, also notice that we have $v_1<v_0$. In terms of \eqref{eq:variance_estimator_under_alternative} it thus seems to be advisable to choose $w\equiv 1$. For the reasons explained in Section \ref{section:optimal_weight}, however, this choice leads to anti-conservative testing procedures for small sample sizes.\\
Please also note that \eqref{eq:sample_size} is only an implicit formula if the accrual rate $r=n/a$ is the quantity which is given externally, where $a$ denotes the length of the recruitment period. Its solution also requires assumptions on the distribution of the accrual and dropout random variables $Y$ and $C$. Common planning assumptions are:
\begin{enumerate}
	\item uniform recruitment during a period of length $a$ with constant rate $r$, i.e. $n=r\cdot a$ and $f_Y(s)=1/a\cdot \mathbbm{1}_{[0,a]} (s)$, followed by an observation period of length $f$ up until the calendar time $t=a+f$ of analysis,
	\item no additional dropouts, i.e. $C=+\infty$ or $S_C \equiv 1$.
\end{enumerate}
With these specifications, formula \eqref{eq:sample_size} can be solved for the only remaining parameter $a$ using standard numerical methods. The required sample size is then $n=r \cdot a$.
\section{Choice of the weight function}\label{section:optimal_weight}
Theoretical considerations in Section \ref{section:definitions} showed, that the test procedure \eqref{eq:two_sided_test} is an asymptotically exact level $\alpha$ test for any weight function $w$. This will also be confirmed by the simulations in Section \ref{section:simulations_variable_weight}. However, the small sample performance strongly depends on the choice of $w$. Whereas some choices of $w$ imply a conservative test, others lead to anti-conservatism when sample size is small. We now want to give advice on how to choose $w$ in order to improve small sample properties of the test with an actual type I error rate close to the nominal one.
\subsection{Uncorrelated variance estimators}
This section deals with one of the two causes for the skewness of the distribution of the test statistic $M_0(t)/ \hat{\sigma}_w(t)$ for small sample sizes: the correlation of $M_0$ and $\hat{\sigma}_w^2(t)$ (see \eqref{eq:martingale_under_H0} resp. \eqref{eq:generalcorrectvarianceestimator}). This problem can be approached in an analytical way by a suitable choice of $w$.\\
It is well-known that such a correlation causes a skewness of the ratio of $M_0$ and $\hat{\sigma}_w(t)$. This problem has been observed for several cases in which numerator and denominator themselves are symmetrically distributed while their correlation causes a skewness of the ratio \cite{Hinkley:1969, Nadarajah:2006, Ly:2019}. In particular, positive correlation causes a left-skew and negative correlation causes a right-skew of the emerging distribution. This leads to an increase of the probability mass on the left tail and a decrease of the probability mass on the right tail in the former case while in the latter case it is just the other way round. Because of this, it is important to not just examine empirical type I error levels of two-sided tests in simulation studies, but also to consider how this empirical type I error level is distributed among the two underlying one-sided tests.\\
As $\Cov_{H_j}(N(t)-A_0(t),A_0(t)) < \Cov_{H_j}(N(t)-A_0(t),N(t))$ for any $j \in \{0,1\}$ and $t\in \mathbb{R}_+$ in any trial where there is almost surely some patient with a positive length of stay, there is a $w_j(t) \in \mathbb{R}$ such that (s.t.)
\begin{equation}\label{eq:defining_equation}
\Cov_{H_j}(N(t) - A_0(t),w_j(t)\cdot N(t) + (1-w_j(t)) \cdot A_0(t))=0\\
\end{equation}
for $j \in \{0,1\}$. As $\text{Cov}_{H_j}(A_0(t),N(t)) < 0$ (see Appendix, Lemma 1), we have $w_j(t)\in [0,1]$ for any $t\geq 0$. Using $w_j(t)$ in \eqref{eq:generalcorrectvarianceestimator}, we obtain a consistent variance estimator $\hat{\sigma}_{w_j}^2(t)$ which by definition is uncorrelated with the numerator of our test statistics under $H_j$. As we will see in the simulations, the choice $\hat{\sigma}_{Wu}^2 (t) = A(t)/2n + N(t)/2n$ is a good first choice w.r.t. a decrease of the correlation of numerator and denominator, which may be improved using the weight function $w_j$ defined by \eqref{eq:defining_equation}.\\
Obviously, \eqref{eq:defining_equation} is equivalent to
\begin{equation}
w_j(t)=-\frac{\Cov_{H_j}(N(t)-A_0(t),A_0(t))}{\Var_{H_j}(N(t)-A_0(t))}
\end{equation}
and
\begin{equation}
1-w_j(t)=\frac{\Cov_{H_j}(N(t)-A_0(t),N(t))}{\Var_{H_j}(N(t)-A_0(t))}.
\end{equation}
For $j=0$ we obtain
\begin{equation}\label{eq:ordinary_weight}
w_0(t)=\frac{\int_0^{t} S_{U}(s)f_{T,0}(s)\Lambda_{T,0}(s)ds}{\int_0^{t} S_{U}(s)f_{T,0}(s)ds}=1-\frac{\int_0^t S_{T,0}(s)\Lambda_{T,0}(s)dS_{U}(s)}{\int_0^t F_{T,0}(s)dS_{U}(s)}.
\end{equation}
Recall that by \eqref{eq:combined_censoring}, $S_U$ depends on the prefixed calendar date of analysis $t$, too. With the abbreviations introduced in Section \ref{section:sample_size} we get for $j=1$, i.e. the scenario of the planning hypothesis
\begin{equation}
w_1(t)=\frac{2v_{00} - v_0^2 - v_{01} + v_0v_1}{v_1 - v_1^2 + 2v_{00} - v_0^2 - 2v_{01} +2v_0v_1}.
\end{equation}
In what follows, we will limit ourselves to the consideration of $w_0$ as it is our primary aim to control the type I error rate of the trial.

\subsection{Practical implementation}
In the application of \eqref{eq:ordinary_weight} we are confronted with the problem that $w_0$ depends (via $S_U$) on the true but unknown distribution of the recruitment and dropout random variables $Y$ and $C$. The weight $w_0$ realizing exact zero correlation of $M_0(t)$ and $\hat{\sigma}^2_{w_0}(t)$ is thus unknown, too. In practice we thus have to use the weight function $\tilde{w}_0$ that is calculated according to \eqref{eq:ordinary_weight} after replacing the true distributions of $Y$ and $C$ by best possible planing assumptions about them.
For testing $H_0$, we thus propose to use the test statistic $M_0(t)/\hat{\sigma}_{\tilde{w}_0}(t)$ as best possible approximation to the actually desired, unknown $M_0(t)/\hat{\sigma}_{w_0}(t)$. On first sight, using $M_0(t)/\hat{\sigma}_{\tilde{w}_0}(t)$ as test statistic appears unusual because it formally depends on distributional assumptions beyond the null hypothesis itself. In this context, however, it is crucial to note that $\tilde{w}_0$ is a valid weight function despite of being chosen based on assumptions about $Y$ and $C$. Consequently, our general results from Section \ref{section:definitions} assure that the statistic $M_0(t)/\hat{\sigma}_{\tilde{w}_0}(t)$ realizes an exact level $\alpha$ test of $H_0$ as sample size increases. A misspecification of the distributional assumptions about $Y$ and $C$ might only lead to a small non-zero correlation between $M_0(t)$ and $\hat{\sigma}^2_{\tilde{w}_0}$ because then $w_0 \neq \tilde{w}_0$. Asymptotically, the type I error rate control is not impaired by such misspecifications.\\
For optimal small sample size performance, however, it should be on the investigator's mind to specify all planning assumptions as well as possible so that $\tilde{w}_0$ is close to $w_0$. Furthermore, it should be borne in mind that these assumptions are relied on in the sample size calculation. Reassuringly, our results below suggest that even major misspecifications of the distributions of $Y$ and $C$ do not have a major impact on the weight suggested in \eqref{eq:ordinary_weight}.\\
Additionally, the resulting testing procedure will be at most as conservative as the standard one-sample log-rank test (i.e. $w\equiv 0$, see \eqref{eq:orig_approach_new_framework}) and at most as anti-conservative as the test which is based on choosing $w\equiv 1$ \cite{Chu:2020}.\\ 

\subsection{Consequences of misspecifications}\label{section:misspecifications}
To illustrate the impact of a misspecification of the distribution of $Y$ and $C$ on the weight $w_0(t)$, we consider the following numerical example: Let the null hypothesis be given by \begin{equation}
H_0\colon \Lambda_T(s)=-\log(1/2)s
\end{equation}
i.e. $T$ is exponentially distributed with a median survival time of two years. In a trial with an accrual duration of one year and a follow-up duration of one year, we compute the weight under the assumption of uniform distribution of $Y$ on $[0,1]$ and a random dropout rate of $-\log(9/10)$ which leads to a dropout of 10\% of the patients each year. In this case, we obtain $w_0(2)=0.4215$ according to \eqref{eq:ordinary_weight}. In contrast, yearly dropout rates of 0 or 30\% lead to the weights $w_0(2)=0.4359$ resp. $w_0(2)=0.3891$. To illustrate a misspecification of the distribution of $Y$, we consider the case that its distribution function on $[0,1]$ is not given by $F_Y(s)=s$, but by $F_Y(s)=s^{\theta}$ for some $\theta>0$. In the case of very early recruitment ($\theta=1/2$), we obtain $w_0(2)=0.4556$ and in the case of very late recruitment ($\theta=2$), we obtain $w_0(2)=0.3844$. If the recruitment speed is incorrectly estimated and the length of the recruitment period amounts to 0.5 or 1.5 years, the weights $w_0(2)=0.4699$ resp. $w_0(2)=0.3770$ are obtained. Since the weight can be selected between 0 and 1, these changes are not only numerically small, but also have only very small effects on small sample properties as we will see in Section \ref{section:simulations_variable_weight}.

\section{Simulation study}\label{section:simulations}
In this section we want to shed a light on multiple aspects of the topics considered so far. In a first simulation we want to illustrate the influence of the choice of the weight on the small sample properties of the one-sample log-rank test and its asymptotics. In Section \ref{section:emp_typeI_and_power} we compare log-rank tests that result from different choices of the variance estimator in terms of sample sizes and empirical errors for a wide range of scenarios.\\ 
As already explained, a skewed distribution of the test statistic implies a lack of concordance with the normal distribution on both tails. Nevertheless, it is possible that the deviations from the nominal level on both tails cancel each other out and it may seem that the empirical error level is close to the nominal level for the two-sided test, although the test may be misbehaving at each tail. An example for such a behaviour can be found in Section \ref{section:real_world_example}. Therefore we primarily focus on one-sided tests of $H_{0,1}$, whose rejection would result in the acceptancce of the superiority of the new treatment. As we naturally carry out two-sided tests with a nominal type I error level of $2\alpha=5\%$, we consider the left-sided tests with a nominal level of $\alpha = 2.5\%$ in what follows. All simulations were performed using R, version 4.0.2 \cite{R}.

\subsection{Influence of the choice of the weight}\label{section:simulations_variable_weight}
In this simulation study, we consider the empirical type I error levels for a variety of sample sizes and weights. We intend to confirm that the asymptotics apply independently of the choice of the weight, but that the weight influences the actual type I error level for small sample sizes. Hence, we consider the properties of the test statistics introduced above under the null hypothesis.\\ 
The accrual duration $a$ and the follow-up duration $f$ of the trial are both set ato $1$ year, i.e. the final analysis takes place after $2$ years. We assumed uniform accrual and a random dropout rate of $10\%$ per year. We considered seven different sample sizes (25, 50, 100, 250, 500, 1000, 5000) and computed the values of the test statistic for 101 different weights, ranging from 0 to 1 in steps of 0.01. The survival variable $T$ is exponentially distributed with parameter $\lambda_{0,T}$. We considered four different parameter values for $\lambda_{0,T}$. They are chosen in such a way that the expected event rates of these trials amount to 20\%, 40\%, 60\% and 80\% per scenario, respectively.\\
We executed 100,000 simulation runs per scenario. In this case, the breadth of a 95\%-confidence interval for an underlying true value of 0.025 is about 0.002. The results are displayed in Figure \ref{fig:asymptotics}.

\begin{figure}[h]
	\centering
	\includegraphics[width=\textwidth]{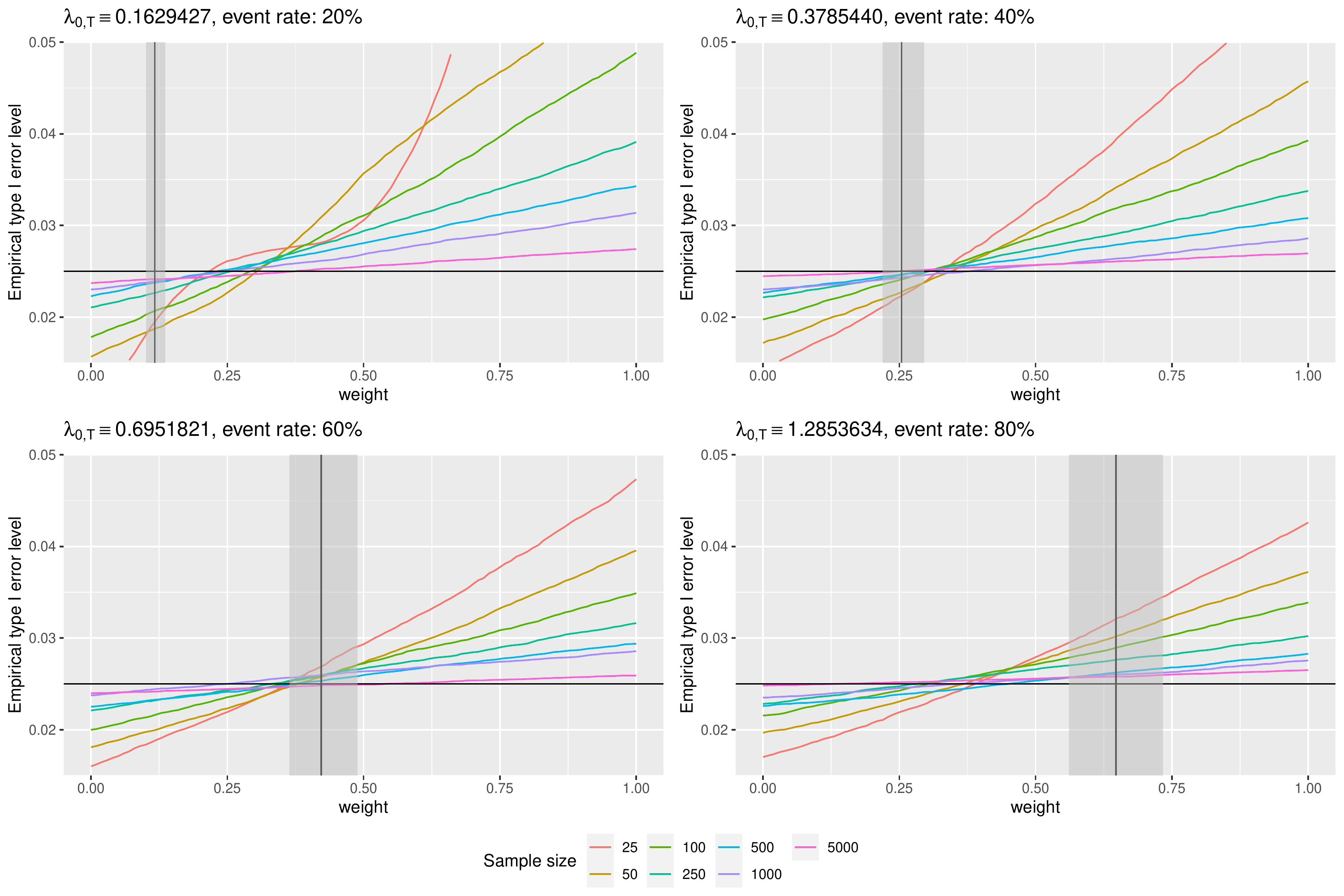}
	\caption{Empirical type I error rates in four different scenarios with event rates of 20\%, 40\%, 60\% and 80\% respectively and weights varying between 0 and 1. The horizontal line marks the nominal level of 0.025, the vertical line marks the weight for which $M_0(2)$ and its variance estimator are uncorrelated. The shaded area accounts for misspecifications of the accrual and censoring mechanism.}
	\label{fig:asymptotics}
\end{figure}

As expected, the convergence to the nominal type I error level for increasing sample size can be observed for any weight and in any scenario. In scenarios with low or medium event rates, the uncorrelated variance estimator from Section \ref{section:optimal_weight} works very well, whereas for high event rates, the approach by Wu \cite{Wu:2014} performs more favorable. Actually, a combination of Wu's and our variance estimator promises a nearly optimal performance concerning the type I error rate as this may prevent from anti-conservativeness in cases with high event rates. Such a combination corresponds to choosing the weight according to 
\begin{equation}\label{eq:combination_Wu_uncorrelated}
w(t)=\min\left(-\frac{\Cov_{H_0}(N(t)-A_0(t),A_0(t))}{\Var_{H_0}(N(t)-A_0(t))}, 0.5 \right).
\end{equation}\\
The shaded areas in the plots of Figure \ref{fig:asymptotics} describe a range of values of the weights which are calculated via \eqref{eq:ordinary_weight} under misspecifications of accrual and censoring mechanisms. Values in this area are obtained if the yearly dropout rate is expected to be between 0 and 20\% and the accrual period lasts between 0.5 and 1.5 years. These are quite severe misspecifications but nevertheless, they only lead to minor deviations with regard to the weight and the empirical type I error rate. In the large sample limit, type I error rate control is guaranteed irrespective of the chosen weight.\\

\subsection{Empirical type I error level and power performance}\label{section:emp_typeI_and_power}
We conduct simulations in order to compare the performance of the different approaches for variance estimation concerning sample size, empirical type I and type II error of the resulting one-sample log-rank tests. To ensure comparability to already existing literature on this topic, we considered scenarios inspired by previous simulation studies \cite{Wu:2015, Wu:2014}.\\
Hence, the survival distribution under the null hyothesis is taken as a Weibull distribution with distribution function $S_0(t)=\exp(-\log(2)(t/m_0)^\kappa)$ and cumulative hazard function $\Lambda_0(t)=\log(2)(t/m_0)^{\kappa}$ where the shape parameter $\kappa$ and the median survival time $m_0$ are given. We assume that the survival time under the alternative is also given by a Weibull distribution with the same shape parameter $\kappa$ and a different median survival time $m_1$, which is determined by the hazard ratio $\delta$ through $\delta=(m_1/m_0)^{\kappa}$. The accrual mechanism is given by an uniform accrual during a period of length $a=1$ with a follow-up period of length $f=3$ corresponding to a final analysis at calendar time $t=4$. We do not consider additional random drop-outs here.\\
We will investigate shape parameters $\kappa \in \{0.1, 0.25, 0.5, 1, 2, 5\}$, but in deviation from other publications \cite{Wu:2015, Wu:2014}, we will not only consider the fixed mean survival time $m_0=1$, but values of $m_0 \in \{1,2,4\}$. On the other hand, we restrict the possible values of $\delta$ to one small ($\delta = 1.2$), one medium ($\delta = 1.5$) and one large ($\delta= 2$) hazard ratio.\\
\begin{table}
	\centering
	\caption{Expected event rates under the null hypothesis and weights for the uncorrelated variance estimator for different combinations of shape parameter $\kappa$ and median survival time $m_0$}
	\begin{tabular}{c c c c c c c}
		\toprule
		&\multicolumn{6}{c}{$\kappa$}\\
		\cmidrule{2-7}
		$m_0$&0.1&0.25&0.5&1&2&5\\
		\midrule
		\multirow{2}{*}{1}&52.98\%&57.58\%&65.31\%&78.96\%&91.52\%&97.18\%\\
		&0.3307&0.3706&0.4481&0.6280&0.8626&0.9599\\
		\multirow{2}{*}{2}&50.55\%&51.40\%&52.98\%&56.04\%&61.85\%&67.35\%\\
		&0.3114&0.3199&0.3383&0.3897&0.5324&0.8062\\
		\multirow{2}{*}{4}&48.16\%&45.50\%&41.39\%&34.43\%&24.85\%&12.89\%\\
		&0.2931&0.2750&0.2504&0.2175&0.1873&0.1664\\
		\bottomrule
	\end{tabular}
	\label{table:expected_events_and_weights}
\end{table}
The reason for extending the range of median survival times under the null hypothesis can be seen in Table \ref{table:expected_events_and_weights}. As the first row indicates, $m_0=1$ leads to trials with high event rates: In each of these scenarios event rates of more than 50\% are expected by time $t=4$ and two of the six scenarios lead to event rates of more than 90\% by time $t=4$. By including larger median event times ($m_0 \in \{2,4\}$), we want to broaden the range of expected event rates under the null hypothesis by time of analysis. This way we can clearly distinguish which approach is most useful in the respective setting.\\
Again we focused on the empirical errors of the one-sided tests with nominal level of $2.5\%$. The sample size was planned with a power of $80\%$. The results can be found in Table \ref{table:samplesizes_emp_erorrs}. We conducted 100 000 simulation runs for each scenario.\\
\begin{table}
	\caption{Sample sizes and empirical type I and type II errors for the four different approaches of variance estimation in all considered scenarios; in bold: method with the smallest absolute deviation from the nominal type I error level in the respective scenario}
	\tiny\centering
	\begin{tabular}{c c c c c c c c c c c}
		\toprule
		&&\multicolumn{9}{c}{$\delta=1.2$}\\
		\cmidrule(l{2pt}r{2pt}){3-11}
		&&\multicolumn{3}{c}{$m_0=1$}&\multicolumn{3}{c}{$m_0=2$}&\multicolumn{3}{c}{$m_0=4$}\\
		\cmidrule(l{2pt}r{2pt}){3-5}
		\cmidrule(l{2pt}r{2pt}){6-8}
		\cmidrule(l{2pt}r{2pt}){9-11}
		$\kappa$&Variance estimation&$n$&$\alpha$&$1-\beta$&$n$&$\alpha$&$1-\beta$&$n$&$\alpha$&$1-\beta$\\
		\midrule
		\multirow{4}{*}{0.1}&compensator&494&\textbf{.0240}&.8047&519&.0230&.8017&545&.0228&.8031\\
		&counting process&435&.0320&.7909&457&.0297&.7896&480&.0305&.7877\\
		&Wu&465&.0278&.7998&488&.0265&.7960&513&.0259&.7953\\
		&uncorrelated&475&.0268&.8013&500&\textbf{.0250}&.7985&526&\textbf{.0248}&.7987\\
		\hline
		\multirow{4}{*}{0.25}&compensator&454&.0226&.8071&510&.0231&.8063&578&.0226&.8053\\
		&counting process&400&.0308&.7925&449&.0311&.7901&509&.0296&.7893\\
		&Wu&427&.0261&.8010&480&.0268&.7987&543&.0262&.7960\\
		&uncorrelated&434&\textbf{.0251}&.8018&491&\textbf{.0253}&.8009&559&\textbf{.0242}&.7999\\
		\hline
		\multirow{4}{*}{0.5}&compensator&398&\textbf{.0239}&.8055&495&.0236&.8059&636&.0226&.8018\\
		&counting process&351&.0309&.7926&436&.0305&.7906&560&.0303&.7882\\
		&Wu&374&.0274&.7981&466&.0263&.7997&598&.0261&.7966\\
		&uncorrelated&377&.0270&.7993&475&\textbf{.0255}&.8014&617&\textbf{.0246}&.7993\\
		\hline
		\multirow{4}{*}{1}&compensator&325&.0226&.8105&466&.0230&.8076&767&.0230&.8026\\
		&counting process&287&.0302&.7943&410&.0300&.7919&675&.0310&.7887\\
		&Wu&306&\textbf{.0262}&.8016&438&.0262&.7993&721&.0270&.7953\\
		&uncorrelated&301&.0271&.7997&444&\textbf{.0255}&.8009&747&\textbf{.0246}&.7997\\
		\hline
		\multirow{4}{*}{2}&compensator&276&.0218&.8114&418&.0230&.8114&1065&.0233&.8053\\
		&counting process&244&.0289&.7959&369&.0300&.7970&937&.0306&.7888\\
		&Wu&260&\textbf{.0245}&.8030&393&\textbf{.0260}&.8038&1001&.0266&.7977\\
		&uncorrelated&248&.0279&.7977&392&.0261&.8036&1041&\textbf{.0242}&.8019\\
		\hline
		\multirow{4}{*}{5}&compensator&258&.0214&.8128&377&.0222&.8091&2057&.0231&.8023\\
		&counting process&228&.0282&.7962&333&.0290&.7953&1810&.0307&.7872\\
		&Wu&243&\textbf{.0250}&.8052&355&\textbf{.0253}&.8031&1934&.0266&.7931\\
		&uncorrelated&229&.0279&.7974&342&.0273&.7984&2016&\textbf{.0246}&.7985\\
		\midrule
		\multicolumn{2}{c}{}&\multicolumn{9}{c}{$\delta=1.5$}\\
		\cmidrule(l{2pt}r{2pt}){3-11}
		\multirow{4}{*}{0.1}&compensator&113&.0202&.8113&119&.0204&.8096&125&.0205&.8108\\
		&counting process&86&.0389&.7801&90&.0376&.7763&95&.0392&.7811\\
		&Wu&100&.0277&.7988&105&.0278&.7948&110&.0281&.7952\\
		&uncorrelated&104&\textbf{.0251}&.8017&110&\textbf{.0252}&.7998&117&\textbf{.0246}&.8045\\
		\hline
		\multirow{4}{*}{0.25}&compensator&104&.0195&.8173&117&.0200&.8110&133&.0206&.8101\\
		&counting process&78&.0364&.7823&88&.0369&.7769&100&.0384&.7761\\
		&Wu&91&.0271&.7998&103&.0274&.7974&117&.0285&.7956\\
		&uncorrelated&95&\textbf{.0245}&.8073&108&\textbf{.0244}&.8011&124&\textbf{.0247}&.8020\\
		\hline
		\multirow{4}{*}{0.5}&compensator&90&.0201&.8151&114&.0203&.8130&147&.0198&.8106\\
		&counting process&68&.0371&.7794&86&.0382&.7771&111&.0372&.7785\\
		&Wu&80&.0278&.8035&100&.0276&.7961&129&.0269&.7944\\
		&uncorrelated&81&\textbf{.0267}&.8045&104&\textbf{.0247}&.7996&138&\textbf{.0232}&.8020\\
		\hline
		\multirow{4}{*}{1}&compensator&73&.0191&.8230&106&.0211&.8125&178&.0208&.8070\\
		&counting process&56&.0359&.7914&81&.0384&.7829&134&.0386&.7754\\
		&Wu&64&\textbf{.0264}&.8054&94&.0282&.8021&156&.0285&.7932\\
		&uncorrelated&62&.0286&.8020&97&\textbf{.0265}&.8048&168&\textbf{.0239}&.8001\\
		\hline
		\multirow{4}{*}{2}&compensator&61&.0193&.8293&95&.0198&.8199&248&.0205&.8067\\
		&counting process&47&.0335&.7979&72&.0350&.7877&186&.0381&.7734\\
		&Wu&54&\textbf{.0254}&.8154&84&\textbf{.0264}&.8073&217&.0280&.7914\\
		&uncorrelated&49&.0312&.8027&83&.0270&.8049&236&\textbf{.0234}&.8007\\
		\hline
		\multirow{4}{*}{5}&compensator&56&.0183&.8283&84&.0195&.8288&480&.0209&.8087\\
		&counting process&44&.0337&.8021&65&.0353&.7979&361&.0385&.7781\\
		&Wu&50&\textbf{.0250}&.8168&74&\textbf{.0260}&.8128&421&.0285&.7942\\
		&uncorrelated&44&.0330&.7997&69&.0313&.8065&461&\textbf{.0231}&.8048\\
		\midrule
		\multicolumn{2}{c}{}&\multicolumn{9}{c}{$\delta=2$}\\
		\cmidrule(l{2pt}r{2pt}){3-11}
		\multirow{4}{*}{0.1}&compensator&46&.0185&.8230&48&.0180&.8172&51&.0169&.8204\\
		&counting process&28&.0502&.7624&30&.0484&.7657&31&.0502&.7597\\
		&Wu&37&.0302&.7939&39&.0297&.7936&41&.0294&.7922\\
		&uncorrelated&40&\textbf{.0250}&.8048&43&\textbf{.0244}&.8089&45&\textbf{.0240}&.8030\\
		\hline
		\multirow{4}{*}{0.25}&compensator&42&.0178&.9269&47&.0183&.8189&54&.0176&.8196\\
		&counting process&26&.0496&.7685&29&.0493&.7630&33&.0506&.7640\\
		&Wu&34&.0295&.8000&38&.0290&.7937&44&.0294&.7978\\
		&uncorrelated&36&\textbf{.0258}&.8057&42&\textbf{.0242}&.8106&48&\textbf{.0239}&.8040\\
		\hline
		\multirow{4}{*}{0.5}&compensator&36&.0180&.8251&46&.0178&.8220&60&.0183&.8192\\
		&counting process&23&.0463&.7768&28&.0491&.7607&37&.0500&.7628\\
		&Wu&29&.0289&.7964&37&.0293&.7941&48&.0299&.7906\\
		&uncorrelated&30&\textbf{.0272}&.8039&40&\textbf{.0249}&.8032&54&\textbf{.0234}&.8047\\
		\hline
		\multirow{4}{*}{1}&compensator&29&.0170&.8393&43&.0179&.8256&72&.0181&.8131\\
		&counting process&18&.0447&.7765&27&.0470&.7728&44&.0516&.7550\\
		&Wu&24&\textbf{.0272}&.8190&35&.0287&.8013&59&.0310&.7950\\
		&uncorrelated&22&.0310&.8012&37&\textbf{.0260}&.8091&66&\textbf{.0226}&.8029\\
		\hline
		\multirow{4}{*}{2}&compensator&23&.0158&.8441&38&.0169&.8353&101&.0181&.8111\\
		&counting process&15&.0410&.7907&24&.0460&.7801&62&.0513&.7555\\
		&Wu&19&\textbf{.0257}&.8189&31&\textbf{.0281}&.8105&82&.0302&.7894\\
		&uncorrelated&17&.0356&.8167&31&.0289&.8143&94&\textbf{.0218}&.8037\\
		\hline
		\multirow{4}{*}{5}&compensator&21&.0157&.8509&33&.0160&.8536&198&.0181&.8152\\
		&counting process&14&.0387&.7988&22&.0401&.8035&121&.0515&.7589\\
		&Wu&18&\textbf{.0244}&.8348&28&\textbf{.0255}&.8361&161&.0311&.7937\\
		&uncorrelated&15&.0373&.8164&24&.0333&.8140&186&\textbf{.0219}&.8097\\
		\bottomrule
	\end{tabular}
	\label{table:samplesizes_emp_erorrs}
\end{table}
As already seen in the previous subsection, our approach works very well for small and medium event rates (values for $\kappa$ and $m_0$ resulting in event rates $\leq 70\%$ according to Table \ref{table:expected_events_and_weights}) and is the best choice concerning the absolute deviance from the nominal type I error level in most of the cases. The first exception concerns scenarios with high event rates (values for $\kappa$ and $m_0$ resulting in event rates $> 70\%$ according to Table \ref{table:expected_events_and_weights}) where the approach by Wu \cite{Wu:2014} is preferred. Only in cases with larger sample sizes the original version of the one-sample log-rank test performs best in means of absolute deviation from the nominal type I error.\\
As one can already see from the flexible sample size formula \eqref{eq:sample_size}, the approach using only the counting process to estimate the variance requires the smallest sample sizes. Nevertheless the type I error is inflated in any scenario, ranging from 2.82\% to 5.16\%. The original variance estimation just behaves the other way round. Obviously, the required sample size is always the highest, while the type I error rate tends to be deflated. It ranges between 1.57\% and 2.4\%, depending on the scenario. The remaining approaches are located in between concerning the sample size whereby our proposed approach requires higher sample sizes if and only if the weight (see Table \ref{table:expected_events_and_weights}) is smaller than $1/2$.\\
Concerning the compliance with the given type II error rate, the uncorrelated variance estimation works best on average with empirical power ranging between 79.7\% and 81.7\%. Here, also the highest deviations occur for high event rates, i.e. in scenarios in which Wu's approach is also superior concerning the empirical type I error. These results confirm that a combination of Wu's and our uncorrelated variance estimator, as given in \eqref{eq:combination_Wu_uncorrelated}, promises the most satisfying performance.
\section{Practical example}\label{section:real_world_example}
We illustrate the differences of the approaches using a practical example. We employ the setting of the Mayo Clinic trial in primary biliary cirrhosis of the liver (PBC), which is a rare but fatal chronic disease whose cause is still unknown \cite{Fleming:2011}. In this double-blinded randomized trial the drug D-penicillamine (DPCA) was compared with a placebo. The study data is publicly available via the survival package in R \cite{R, Therneau:2020}.\\
Among the 158 patients of the cohort treated with DPCA, 65 died during the trial. For the sake of comparability, we adopt the already established parameter estimation of their survival curve \cite{Wu:2015}, which states that a Weibull distribution with shape parameter $\kappa=1.22$ and median survival time $m_0=9$ years fits the data well. We now suppose, that a new treatment becomes available and the data from this trial shall be used to compare the survival under this new treatment to the one under DPCA. This shall be accomplished in a trial in which patients are recruited uniformly over a accrual period of length $a=5$ years and followed-up in an additional period of length $f=3$ years. As in the preceding simulations, the planning hypothesis is supposed to fulfill the proportional hazards assumption and hence $S_{T,1}(s)^{\delta}=S_{T,0}(s)$ for any $s\geq 0$. A hazard ratio of $\delta=1.75$ shall be detected with a power of $1-\beta=0.8$ via a two-sided test with significance level $2\alpha=0.05.$\\
\begin{table}
	\caption{Sample sizes and empirical type I \& type II errors for the planning of a single-arm survival trial for PBC, where $\alpha$ denotes the two-sided empirical type I error and $\alpha_l$ denotes the empirical type I error for the left-sided testing problem $H_{0,1}$.}
	\centering
	\begin{tabular}{l c c c c}
		\toprule
		Variance estimation & $n$ & $\alpha$ & $\alpha_l$ & $1-\beta$\\
		\midrule
		compensator&113&.0504&.0193&.8120\\
		counting process&76&.0578&.0455&.7671\\
		Wu&95&.0511&.0293&.7931\\
		uncorrelated&106&.0493&.0225&.8045\\
		\bottomrule
	\end{tabular}
	\label{table:real_world_example}
\end{table}
The equation \eqref{eq:ordinary_weight} yields a prefixed weight of $w_0(8)=0.1923$ for this scenario and hence our suggested test statistic amounts to
\begin{equation}
\frac{N(8)-A_0(8)}{\sqrt{0.8077 \cdot A_0(8) + 0.1923 \cdot N(8)}}.
\end{equation}
The results of the following simulation with 100 000 runs, which are displayed in Table \ref{table:real_world_example} show that in this real world example, our proposed approach is very close to the nominal type I error level in terms of empirical type I error of the two-sided test as well as the left-sided test. The original one-sample log-rank test and Wu's suggestion look similar for the two-sided testing problem (column $\alpha$ in Table \ref{table:real_world_example}) while they perform remarkably worse in case of the left-sided test (column $\alpha_l$ in Table \ref{table:real_world_example}). The phenomenon of unbalanced left- and right-sided type I errors which cancel each other out quite well in their sum is remarkable here.\\
Although the sample size for our suggested approach is about 10\% higher than for Wu's suggestion, there is still a saving in sample size compared to the standard approach.
\section{Discussion}\label{section:discussion}
We introduced a simple but extensive framework, enabling a continuum of consistent variance estimators referring to the one-sample log-rank test. Asymptotical correctness as well as power and sample size calculations are provided. The classical one-sample log-rank test \cite{Breslow:1975} and a practical alternative \cite{Wu:2014} naturally fit into the framework. 
In addition, we elaborated special choices for the weight function $w$ which guarantees that the variance estimator is uncorrelated to the compensated counting process under the null hypothesis or under a planning alternative, respectively. In several simulations and in an example based on real world data, we can see that the emerging test is superior to other approaches concerning the adherence to the nominal type I error level. This superiority is most remarkable in small sized trials with small to medium event rates. 
It is important to note that our approach formally depends on planning assumptions about accrual and dropout mechanism, but that misspecifications of those do not impair the asymptotical correctness of the resulting one-sample log-rank test. The effects of such misspecifications on the weight have been demonstrated in Section \ref{section:misspecifications}. However, both the theoretical results and the simulations in Section \ref{section:simulations_variable_weight} show that they do not affect the asymptotical properties of the test in the large sample limit.\\ 
Nevertheless, we saw in our simulation studies that Wu's suggested weight works better than the uncorrelated variance estimation in scenarios with high event rates. All in all, this suggests to choose the weight according to \eqref{eq:combination_Wu_uncorrelated}.\\
One might also pursue a simulation-based approach to find the perfect weight for the envisaged scenario which can cancel out a possible skewness of $M_0(t)$ under the null hypothesis. In this case, the theory from Section \ref{section:definitions} provides the asymptotical correctness and sample size calculation can be done as lined out in Section \ref{section:sample_size}.\\ 
From a mathematical point of view, it would even be allowed to choose a random weight $W$ as long as it can be guaranteed that $W\in [0,1]$. This is proven in Lemma \ref{lemma:random_weight} of the Appendix. It can be of use for non-clinical trial data or in other scenarios in which we do not have any prior information available for the accrual and censoring distributions or one does not want to use a prespecified weight because of uncertainty about these mechanisms. Then we can use information from the trial itself to compute
\begin{equation}
W(t)\coloneqq 1-\frac{\int_0^t S_{T,0}(s)\Lambda_{T,0}(s)d\hat{F}_{U}(s)}{\int_0^t F_{T,0}(s)d\hat{F}_{U}(s)}
\end{equation}
where $\hat{F}_{U}$ is the Kaplan-Meier estimator of the distribution function of the random variable $U \coloneqq C \wedge (t-Y)_+$. With the well-known properties of such Kaplan-Meier integrals \cite{Stute:1995}, we can state the convergence
\begin{equation}
W(t) \underset{n\to\infty}{\to} 1-\frac{\int_0^t S_{T,0}(s)\Lambda_{T,0}(s)dF_{U}(s)}{\int_0^t F_{T,0}(s)dF_{U}(s)} \quad \mathbb{P}\text{-almost surely.}
\end{equation}
In any case, the way in which the weight will be determined, should be specified in advance and must not be changed while the trial is ongoing.\\
In conclusion, our framework yields a solid foundation for such and possible further considerations. This includes extensions to multi-stage \cite{Shan:2019, Belin:2017, Kwak:2014}, multivariate \cite{Danzer:2021} and other variations \cite{Chu:2020} of the classical one-sample log-rank test.

\section*{Appendix}
\subsection*{Non-positivity of Covariance of $A_0(t)$ and $N(t)$}
As one can see from \eqref{eq:defining_equation} we need to show that $\text{Cov}_{H_j}(A_0(t), N(t))\leq 0$ if we want to state that $w_j(t) \in [0,1]$. First of all, we have
\begin{equation*}
\text{Cov}_{H_j}(A_0(t), N(t)) = \sum_{i=1}^n \text{Cov}_{H_j}(A_{0,i}(t), N_i(t))
\end{equation*}
by the independence of our observations. As the observations are also identically distributed it suffices to show $\text{Cov}_{H_j}(A_{0,1}(t), N_1(t))\leq0$ which will be accomplished by the following Lemma. 
\begin{lemma}\label{lemma:negative_covariance}
Let $(A_{0,1}(t))_{t \in \mathbb{R}_+}$ and $(N_1(t))_{t \in \mathbb{R}_+}$ be the two real-valued stochastic processes as defined in \eqref{eq:counting_process} resp. \eqref{eq:compensator_under_H0} and let a fixed hypothesis $H_j$ about the distribution of the $T_i$ be given. Then we have
\begin{equation*}
\text{Cov}_{H_j}(A_{0,1}(t), N_1(t))\leq 0
\end{equation*}
for any $t\in\mathbb{R_+}$ and $j\in\{0,1\}$.
\end{lemma}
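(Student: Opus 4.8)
The plan is to strip the statement down to a single patient, linearise it through the compensator's martingale structure, and reduce everything to one scalar variance bound. Conditioning on the entry time $Y_1$, both $N_1(t)=\mathbbold{1}_{\{T_1\le U\}}$ and $A_{0,1}(t)=\Lambda_{T,0}(T_1\wedge U)$ become functions of the survival time $T_1$ and the independent length of stay $U\coloneqq C_1\wedge(t-Y_1)_+$, so I treat $T_1$ and $U$ as independent. Under either hypothesis $H_0$ or $H_1$ the proportional-hazards structure gives $\Lambda_{T,0}=\theta\,\Lambda_{T,j}$ with $\theta>0$ and $\Lambda_{T,j}$ the true cumulative hazard of $T_1$; since a positive constant does not change the sign of a covariance, I may assume that $\Lambda_{T,0}$ is $T_1$'s own cumulative hazard, and then write $A_{0,1}=W\wedge V$ via the probability integral transform, with $W\coloneqq\Lambda_{T,0}(T_1)\sim\mathrm{Exp}(1)$ and $V\coloneqq\Lambda_{T,0}(U)$ independent of $W$. (This reduction is not cosmetic: without proportionality of the hazards the statement is false, so it is the right place to use what $H_0,H_1$ provide.)

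A first, purely qualitative justification of the sign comes from conditioning on $U$: for fixed $U=u$ the compensator $\tau\mapsto\Lambda_{T,0}(\tau\wedge u)$ is non-decreasing and the indicator $\tau\mapsto\mathbbold{1}_{\{\tau\le u\}}$ is non-increasing in $T_1$, so Chebyshev's association inequality yields $\Cov_{H_j}(A_{0,1},N_1\mid U)\le 0$. This alone does not close the argument, because the law of total covariance adds $\Cov_{H_j}(\mathbb{E}[A_{0,1}\mid U],\mathbb{E}[N_1\mid U])$, and both conditional expectations are non-decreasing in $U$, so this cross term is non-negative and competes with the favourable one.

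To bypass that competition I would exploit the martingale $N_1-A_{0,1}$, whose predictable variation is its continuous compensator $A_{0,1}$ itself, so that $\Var_{H_j}(N_1-A_{0,1})=\mathbb{E}_{H_j}[A_{0,1}]$. Writing $m\coloneqq\mathbb{E}_{H_j}[N_1]=\mathbb{E}_{H_j}[A_{0,1}]$ and using $\Var_{H_j}(N_1)=m-m^2$ (as $N_1$ is Bernoulli), expanding the variance of the difference gives the exact identity
\[
\Cov_{H_j}(A_{0,1},N_1)=\tfrac{1}{2}\big(\Var_{H_j}(A_{0,1})-m^2\big),
\]
so the lemma is equivalent to the clean statement $\Var_{H_j}(A_{0,1})\le\big(\mathbb{E}_{H_j}[A_{0,1}]\big)^2$, i.e. the coefficient of variation of the compensator is at most one.

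The main obstacle is this variance bound for $A_{0,1}=W\wedge V$, which in the transformed coordinates reads $\mathbb{E}[Ve^{-V}]\ge\mathbb{E}[e^{-V}]\big(1-\mathbb{E}[e^{-V}]\big)$ and holds with equality precisely when the length of stay is memoryless on the hazard scale. The naive pointwise estimate $e^{-v}\ge 1-v$ is too weak here (it lands below the target after taking expectations), so the decisive input must be a genuine dispersion bound: since $W$ contributes a constant hazard $1$ while $U=C\wedge(t-Y)_+$ has a well-behaved, bounded hazard, I expect $W\wedge V$ to inherit an increasing-hazard / new-better-than-used property, for which $\mathrm{CV}\le 1$ is classical. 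Isolating the exact regularity of the censoring mechanism that secures this bound — morally the ``positive length of stay'' proviso invoked in the main text — is where the real work lies; the monotonicity step is comparatively routine.
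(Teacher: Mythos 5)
Your reduction is carried out correctly as far as it goes --- the identity $\Cov(A_{0,1},N_1)=\tfrac12\bigl(\Var(A_{0,1})-m^2\bigr)$ under the martingale normalization, and its transformed version $\mathbb{E}[Ve^{-V}]\ge\mathbb{E}[e^{-V}]\bigl(1-\mathbb{E}[e^{-V}]\bigr)$, both check out --- but the step you defer as ``where the real work lies'' is a genuine gap, and it cannot be closed: the dispersion bound $\Var_{H_j}(A_{0,1}(t))\le\bigl(\mathbb{E}_{H_j}[A_{0,1}(t)]\bigr)^2$, and with it the lemma itself, is \emph{false} under the paper's stated assumptions. Take $j=0$ with $T_1\sim\mathrm{Exp}(1)$, so $\Lambda_{T,0}(s)=s$ and proportionality of hazards is not at issue; fix $t=1$, $y_1=0$, and let the dropout time satisfy $\mathbb{P}[C_1=0]=\mathbb{P}[C_1=2]=\tfrac12$, so that $U\coloneqq C_1\wedge(t-y_1)_+$ is $0$ or $1$ with probability $\tfrac12$ each. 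Then
\begin{equation*}
m=\mathbb{E}[N_1(1)]=\mathbb{E}[A_{0,1}(1)]=\tfrac12\bigl(1-e^{-1}\bigr)\approx 0.316,\qquad
\mathbb{E}[A_{0,1}(1)\,N_1(1)]=\tfrac12\,\mathbb{E}\bigl[T_1\mathbbold{1}_{\{T_1\le 1\}}\bigr]=\tfrac12\bigl(1-2e^{-1}\bigr)\approx 0.132,
\end{equation*}
so $\Cov(A_{0,1}(1),N_1(1))\approx 0.132-0.100>0$, equivalently $\Var(A_{0,1}(1))\approx 0.164>m^2$. (Moving the atom of $C_1$ from $0$ to a small $\epsilon>0$, or smoothing it, changes nothing.) Your instinct that an NBU-type property of the length of stay is the missing ingredient is exactly right --- this $U$ violates NBU already at the origin --- but the lemma imposes no hypothesis whatsoever on $F_C$ and $F_Y$, so no ``isolation of the right regularity'' can turn your outline into a proof of the statement as given. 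Your parenthetical claim that proportional hazards is the decisive structure is also off target: the counterexample lives under $H_0$, where PH is trivial, while your own Chebyshev-association observation already proves the lemma for deterministic $U$ with no relation at all between $\Lambda_{T,0}$ and the true hazard. The dividing line is the law of $U$, not PH; PH is needed only internally to your route, to secure $\mathbb{E}[N_1]=\mathbb{E}[A_{0,1}]$ and $\Var(N_1-A_{0,1})=\mathbb{E}[A_{0,1}]$.

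For comparison, the paper argues along a different route that uses neither martingale identities nor PH: since $N_1(t)$ is Bernoulli, $\Cov_{H_j}(A_{0,1}(t),N_1(t))=\bigl(\mathbb{E}_{H_j}[A_{0,1}(t)\mid N_1(t)=1]-\mathbb{E}_{H_j}[A_{0,1}(t)]\bigr)\mathbb{E}_{H_j}[N_1(t)]$, and the sign is then deduced from a claimed stochastic dominance of $T_1$ given $\{T_1\le U\}$ by $T_1\wedge U$, combined with monotonicity of $\Lambda_{T,0}$. You should know, however, that the counterexample above refutes that dominance as well (at $s=0$ one has $\mathbb{P}[T_1\le 0\mid T_1\le U]=0<\tfrac12=\mathbb{P}[T_1\wedge U\le 0]$): the paper's law-of-total-probability step weights the partition $\{U\le s\}$, $\{U>s\}$ by the unconditional probabilities $\mathbb{P}[U\le s]$, $\mathbb{P}[U>s]$ where the conditional ones given $\{T_1\le U\}$ are required. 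So the obstruction you ran into is intrinsic to the lemma, not a defect of your approach; under an added assumption on the censoring mechanism --- deterministic $U$, or NBU of $V=\Lambda_{T,0}(U)$, whence $W\wedge V$ is NBU and the classical $\mathrm{CV}\le 1$ bound applies --- your argument does go through, and in that restricted setting it is a cleaner and more informative proof than the paper's.
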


\begin{proof}
\begin{align*}
&\text{Cov}_{H_j}(A_{0,1}(t), N_1(t))\\
=&\mathbb{E}[A_{0,1}(t)\cdot N_1(t)] - \mathbb{E}_{H_j}[A_{0,1}(t)]\cdot \mathbb{E}_{H_j}[N_1(t)]\\
=&\mathbb{E}[A_{0,1}(t); N_1(t)=1] - \mathbb{E}_{H_j}[A_{0,1}(t)]\cdot \mathbb{E}_{H_j}[N_1(t)]\\
=&\mathbb{E}[A_{0,1}(t)|N_1(t)=1]\cdot \mathbb{E}_{H_j}[N_1(t)] - \mathbb{E}_{H_j}[A_{0,1}(t)]\cdot \mathbb{E}_{H_j}[N_1(t)]\\
=&(\mathbb{E}[A_{0,1}(t)|N_1(t)=1] - \mathbb{E}_{H_j}[A_{0,1}(t)])\cdot \mathbb{E}_{H_j}[N_1(t)]
\end{align*}
as $N_1(t) \in \{0,1\}$ almost surely. The second factor is non-negative and hence we need to check the negativity of the first factor. As
\begin{align*}
\mathbb{E}_{H_j}[A_{0,1}(t)|N_1(t)=1]&=\mathbb{E}_{H_j}[\Lambda_{0,T}(T_1\wedge C_1 \wedge (t-Y_1)_+) | T_1 \leq C_1 \wedge (t-Y_1)_+]\\
&=\mathbb{E}_{H_j}[\Lambda_{0,T}(T_1) | T_1 \leq C_1 \wedge (t-Y_1)_+]
\end{align*}
and 
\begin{equation*}
\mathbb{E}_{H_j}[A_{0,1}(t)]=\mathbb{E}_{H_j}[\Lambda_{0,T}(T_1\wedge C_1 \wedge (t-Y_1)_+)]
\end{equation*}
and $\Lambda_{0,T}$ is a non-decreasing function, $\mathbb{E}_{H_j}[A_{0,1}(t)|N_1(t)=1] \leq \mathbb{E}_{H_j}[A_{0,1}(t)]$ holds if $T_1|T_1 \leq C_1 \wedge (t-Y_1)_+ \preceq T_1 \wedge C_1 \wedge (t-Y_1)_+$. For any $s\geq 0$ we have by the law of total probability
\begin{align*}
&\mathbb{P}_{H_j}[T_1\leq s|T_1\leq C_1\wedge (t-Y_1)_+]\\
=&\mathbb{P}_{H_j}[T_1\leq s|T_1\leq C_1\wedge (t-Y_1)_+, C_1\wedge(t-Y_1)_+ \leq s] \cdot \mathbb{P}_{H_j}[C_1 \wedge (t-Y_1)_+ \leq s]\\
&+\mathbb{P}_{H_j}[T_1\leq s|T_1\leq C_1\wedge (t-Y_1)_+, C_1\wedge(t-Y_1)_+ > s] \cdot \mathbb{P}_{H_j}[C_1 \wedge (t-Y_1)_+ > s]\\
=&\mathbb{P}_{H_j}[C_1 \wedge (t-Y_1)_+ \leq s]\\
&+\mathbb{P}_{H_j}[T_1\leq s|T_1\leq C_1\wedge (t-Y_1)_+, C_1\wedge(t-Y_1)_+ > s] \cdot \mathbb{P}_{H_j}[C_1 \wedge (t-Y_1)_+ > s]
\end{align*}
resp.
\begin{align*}
&\mathbb{P}_{H_j}[T_1 \wedge C_1\wedge (t-Y_1)_+ \leq s]\\
=&\mathbb{P}_{H_j}[T_1 \wedge C_1\wedge (t-Y_1)_+ \leq s|C_1\wedge(t-Y_1)_+ \leq s] \cdot \mathbb{P}_{H_j}[C_1 \wedge (t-Y_1)_+ \leq s]\\
&+\mathbb{P}_{H_j}[T_1 \wedge C_1\wedge (t-Y_1)_+ \leq s|C_1\wedge(t-Y_1)_+ > s] \cdot \mathbb{P}_{H_j}[C_1 \wedge (t-Y_1)_+ > s]\\
=&\mathbb{P}_{H_j}[C_1 \wedge (t-Y_1)_+ \leq s]\\
&+\mathbb{P}_{H_j}[T_1\leq s] \cdot \mathbb{P}_{H_j}[C_1 \wedge (t-Y_1)_+ > s].
\end{align*}
Now, we can conclude with Bayes' theorem:
\begin{align*}
&\mathbb{P}_{H_j}[T_1\leq s|T_1\leq C_1\wedge (t-Y_1)_+, C_1\wedge(t-Y_1)_+ > s]\\
=& \mathbb{P}_{H_j}[T_1 \leq C_1\wedge (t-Y_1)_+|T_1\leq s, C_1\wedge(t-Y_1)_+ > s]\\
&\cdot \frac{ \mathbb{P}_{H_j}[T_1\leq s| C_1\wedge(t-Y_1)_+ > s]}{\mathbb{P}_{H_j}[T_1 \leq C_1\wedge (t-Y_1)_+| C_1\wedge (t-Y_1)_+ > s]}\\
=&\frac{\mathbb{P}_{H_j}[T_1\leq s]}{\mathbb{P}_{H_j}[T_1 \leq C_1\wedge (t-Y_1)_+| C_1\wedge (t-Y_1)_+ > s]}\\
\geq & \mathbb{P}_{H_j}[T_1\leq s].\\
\end{align*}
\end{proof}

\subsection*{Consistency of variance estimation with random weights}
As proposed in Section \ref{section:discussion}, we can also choose the weight randomly at the analysis stage based on the observed censoring and accrual mechanism. The following Lemma guarantees that such a procedure also leads to a consistent estimation of the variance and hence to an asymptotically correct testing procedure.

\begin{lemma}\label{lemma:random_weight}
Let $(X_n)_{n\in\mathbb{N}}$ and $(Y_n)_{n\in\mathbb{N}}$ be two sequences of real-valued random variables with $X_n \overset{\mathbb{P}}{\to}c$ and $Y_n \overset{\mathbb{P}}{\to}c$ as $n\to\infty$ for some constant $c \in \mathbb{R}$. Let $(W_n)_{n\in\mathbb{N}}$ be another sequence of random variables. The support of any of these random variables is a subset of $[0,1]$, i.e. $0 \leq W_n \leq 1$ a.s. for any $n \in \mathbb{N}$. Then
\begin{equation}
W_n\cdot X_n + (1-W_n)\cdot Y_n \overset{\mathbb{P}}{\to} c
\end{equation}
as $n\to\infty$.
\end{lemma}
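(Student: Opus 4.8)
The plan is to exploit the convex-combination structure of the expression: since $W_n$ and $1-W_n$ both lie in $[0,1]$, the random weights can never amplify the deviations of $X_n$ and $Y_n$ from their common limit $c$. This observation lets me bound the quantity of interest by a sum in which $W_n$ no longer appears, so that no distributional information about the (possibly data-dependent) weights is needed at all.

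First I would center everything at $c$, using only the identity $W_n + (1-W_n) = 1$:
\begin{equation*}
W_n X_n + (1-W_n) Y_n - c = W_n (X_n - c) + (1-W_n)(Y_n - c).
\end{equation*}
Taking absolute values and applying the triangle inequality together with $0 \le W_n \le 1$ almost surely yields the pathwise bound
\begin{equation*}
\left| W_n X_n + (1-W_n) Y_n - c \right| \le W_n |X_n - c| + (1-W_n)|Y_n - c| \le |X_n - c| + |Y_n - c|.
\end{equation*}

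The convergence in probability then follows by a routine union-bound argument. For fixed $\varepsilon > 0$, the event $\{|X_n-c| + |Y_n-c| > \varepsilon\}$ is contained in $\{|X_n - c| > \varepsilon/2\} \cup \{|Y_n - c| > \varepsilon/2\}$, whence
\begin{equation*}
\mathbb{P}\left( \left| W_n X_n + (1-W_n) Y_n - c \right| > \varepsilon \right) \le \mathbb{P}\left( |X_n - c| > \tfrac{\varepsilon}{2} \right) + \mathbb{P}\left( |Y_n - c| > \tfrac{\varepsilon}{2} \right),
\end{equation*}
and both terms on the right tend to $0$ as $n\to\infty$ by the hypotheses $X_n \overset{\mathbb{P}}{\to} c$ and $Y_n \overset{\mathbb{P}}{\to} c$.

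There is no genuine obstacle here; the only thing to be careful about is resisting the temptation to argue via Slutsky's theorem or via convergence of $W_n$, since $W_n$ need \emph{not} converge and may share its randomness with $X_n$ and $Y_n$. The essential point is precisely that the $[0,1]$ constraint makes the bound uniform in the weight, which is exactly what makes an arbitrary, dependent random weight admissible in the variance estimator.
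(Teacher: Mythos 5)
Your proof is correct and follows essentially the same route as the paper's: the same centering $W_n(X_n-c)+(1-W_n)(Y_n-c)$, the triangle inequality, a union bound at level $\varepsilon/2$, and the pathwise bound $0\le W_n\le 1$ to eliminate the weights. The only (immaterial) difference is that you discard $W_n$ pathwise before taking probabilities, whereas the paper applies subadditivity first and drops the weights inside each probability afterwards.
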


\begin{proof}
Let $\varepsilon > 0$ arbitrary. From the triangle inequality, we get
\begin{align}
&\mathbb{P}[|W_n(X_n-c)+(1-W_n)(Y_n-c)|>\varepsilon]\\
\leq & \mathbb{P}[|W_n(X_n-c)|+|(1-W_n)(Y_n-c)|>\varepsilon]
\end{align}
Now $|W_n(X_n-c)|+|(1-W)(Y_n-c)|>\varepsilon$ only holds if at least one of the terms $|W_n(X_n-c)|$ and $|(1-W)(Y_n-c)|$ exceeds $\varepsilon/2$. With subadditivity and the properties of the $W_n$, we can conclude with
\begin{align}
& \mathbb{P}[|W_n(X_n-c)|+|(1-W_n)(Y_n-c)|>\varepsilon]\\
\leq &\mathbb{P}[|W_n(X_n-c)|>\varepsilon/2]+\mathbb{P}[|(1-W_n)(Y_n-c)|>\varepsilon/2]\\
\leq &\mathbb{P}[|(X_n-c)|>\varepsilon/2]+\mathbb{P}[|(Y_n-c)|>\varepsilon/2]
\end{align}
as both of these summands converge to zero by our assumptions.
\end{proof}

\section*{Acknowledgments}
The work of the corresponding author was supported by the German Science Foundation (DFG,
grant number 413730122).

\bibliographystyle{unsrt}

\end{document}